\DeclareMathOperator{\tr}{tr}
\newcommand{\ket}[1]{\left| #1 \right\rangle}
\newcommand{\ketbra}[2]{\left|#1\middle\rangle\middle\langle#2\right|}
\newcommand{\eg}{\emph{e.g.}\@\xspace}
\newcommand{\ie}{\emph{i.e.}\@\xspace}
\newtheorem{theorem}{Theorem}
\newtheorem{lemma}{Lemma}
\newtheorem{corollary}{Corollary}
\newcommand{\ba}{\begin{eqnarray}}
\newcommand{\ea}{\end{eqnarray}}
\newcommand{\ban}{\begin{eqnarray*}}
\newcommand{\ean}{\end{eqnarray*}}
\begin{document}

\title{Super-Activation of Quantum Steering}

\author{Marco T\'ulio Quintino}
\affiliation{Groupe de Physique Appliqu\'ee, Universit\'e de Gen\`eve, CH-1211 Gen\`eve, Switzerland}
\author{Marcus Huber}
\affiliation{Institute for Quantum Optics and Quantum Information (IQOQI), Austrian Academy of Sciences, Boltzmanngasse 3, 
A-1090 Vienna, Austria}
\author{Nicolas Brunner}
\affiliation{Groupe de Physique Appliqu\'ee, Universit\'e de Gen\`eve, CH-1211 Gen\`eve, Switzerland}

\date{\today}  

\begin{abstract}
We consider Einstein-Podolsky-Rosen steering in the regime where the parties can perform collective measurements on many copies of a given shared entangled state. We derive a simple and efficient condition for guaranteeing that an entangled state is $k$-copy steerable. In particular we show that any two-qubit and qubit-qutrit entangled states is $k$-copy steerable. This allows us to discuss the effect of super-activation of steering, whereby an entangled state that is unsteerable (\ie, admits a local hidden state model) in the one-copy regime becomes $k$-copy steerable. We provide examples with few copies and low dimensions. Our results give evidence that entanglement and steering could become equivalent in the multi-copy regime. 
\end{abstract}

\maketitle

\section{Introduction}

The notion of nonlocality appears in quantum mechanics in several forms. Bell nonlocality \cite{bell64,brunner_review} is the strongest form of inseparability and is relevant to device-independent quantum information protocols \cite{acin07}. Two distant observers can observe quantum nonlocality by performing local measurements on a shared entangled state, in the sense that the resulting statistics cannot arise from a local hidden variable (LHV) model. Another concept is Einstein-Podolsky-Rosen steering \cite{einstein35,wiseman07}, or simply quantum steering, a form of nonlocality that is intermediate between entanglement and Bell nonlocality. Here the measurement statistics are tested with respect to a local hidden state (LHS) model, which represents a particular class of LHV models where the hidden variable can be considered as a quantum state. Steering also finds applications in quantum information \cite{branciard11b}.
	
	Interestingly these different forms of non separability are strictly different at the level of quantum states \cite{werner89,barrett02}. Specifically, there exist entangled states which cannot lead to steering, \ie, admitting a LHS model \cite{wiseman07}. Furthermore, there exist entangled states that can lead to steering but cannot give Bell nonlocality, \ie, admit a LHV model \cite{wiseman07}. In fact, there is a strict hierarchy between Bell Nonlocality, steering, and entanglement \cite{quintino15}, even when considering the most general measurements, \ie, positive-operator-valued-measures (POVMs).

	\begin{figure}[b!]
		\includegraphics[width=0.9 \columnwidth]{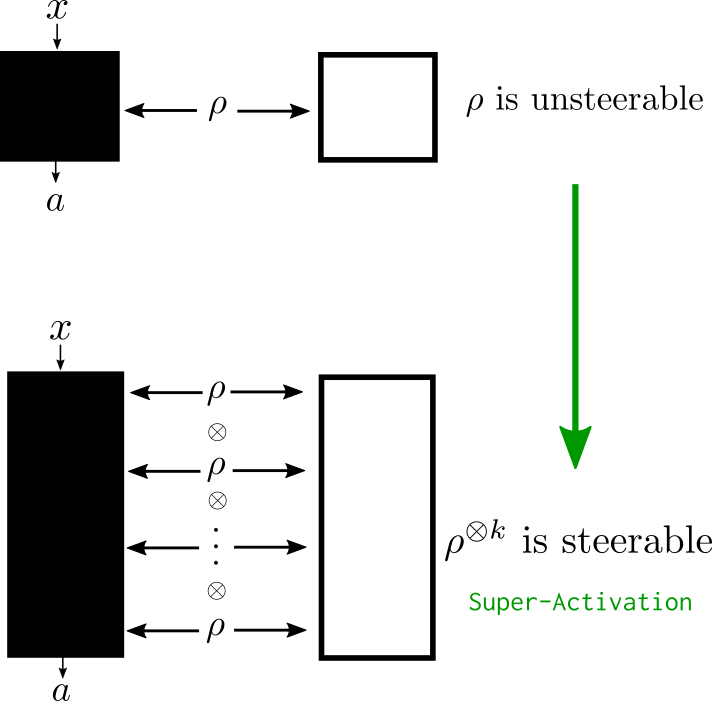}
		\caption{Two parties can obtain quantum steering by performing local measurements on $k$ copies of an entangled state that cannot lead to steering for a single copy. This phenomenon is referred to as the super-activation of steering. We consider the situation where Alice steers Bob. That is, Alice is the untrusted part (represented by a black box) while Bob is the trusted part characterizing his assemblage via quantum tomography (represented by a white box).}
\end{figure}

	These results were established in the so-called standard Bell/steering scenario, where the parties perform local measurements on a single copy of the entangled state $\rho$ in each round of the test. In order to obtain the statistics for testing a Bell or steering inequality, many rounds are required. However, one may consider more general Bell/steering tests. In particular, we will consider here the situation where the parties use $k$ copies of the entangled state $\rho$, which can be described as a global bipartite entangled state $\rho^{\otimes k}$, and perform local collective measurements on it. This will be referred to as the $k$-copy Bell/steering scenario. 
	
	Importantly, the parties can now perform localy some joint measurements (\eg, featuring entangled POVM elements) on $k$ local systems. This leads to novel possibilities. Namely, it can be the case that an entangled state leads to Bell nonlocality or steering in the $k$-copy scenario (\ie, $\rho^{\otimes k}$ is nonlocal/steerable), despite the fact the initial entangled state $\rho$ admits a LHV/LHS model. This phenomenon is called super-activation of nonlocality/steering. This is possible because the statistics of the joint measurement on $k$ copies of $\rho$ are not covered by the LHV/LHS model for $\rho$ which considers only measurements on a single copy.


In a seminal paper, Peres considered this question, however in a slightly more general scenario, where the parties can perform an initial pre-processing on many copies of the state \cite{peres96} (see also \cite{masanes06}). In this case, any entangled state that is distillable becomes nonlocal. The scenario without pre-processing, which we focus on here, was only considered later. First, Liang and Doherty \cite{liang06} showed that the maximal violation of the CHSH inequality \cite{chsh69} of a particular state $\rho$ can be increased if the parties perform joint measurements on multiple copies. A next step was made by Navascués and Vértesi \cite{vertesi11} who presented entangled states such that $\rho$ does not violate the CHSH inequality but $\rho\otimes\rho$ does. These results demonstrate the activation of quantum nonlocality, which was also discussed in the multipartite case \cite{caban15,paul15}. Other works showed the activation of general nonlocal non-signaling correlations via local wirings, a process termed nonlocality distillation \cite{forster09,BS,brunner10}.

An even stronger form of activation was discovered by Palazuelos, who showed that quantum nonlocality can be super-activated \cite{palazuelos12}. Specifically, he proved that certain entangled states $\rho$ admitting a LHV model for general POVMs can be super-activated, in the sense that $\rho^{\otimes k}$ violates a Bell inequality (for some finite $k$). To derive this result, he took advantage of known Bell inequalities with unbounded quantum violation \cite{junge09,KV,buhrman10}. Next, Cavalcanti and colleagues \cite{cavalcanti12} presented a general criterion for $k$-copy nonlocality. Specifically any entangled state $\rho$ of dimension $d \times d$ that is useful for quantum teleportation (\ie, with maximal entanglement fraction greater than $1/d$ \cite{horodecki98}) is $k$-copy nonlocal. This construction provides large classes of entangled state for which nonlocality can be super-activated. In particular this is the case for any isotropic entangled state (of any dimension) which admits a LHV model.

In this paper we discuss the $k$-copy scenario for quantum steering. Our main result is a simple and general criterion for $k$-copy steerability of a quantum state $\rho$. Our method exploits the reduction criterion \cite{horodecki_reduction}, used in the context of entanglement distillation, and can thus be computed efficiently, by finding the minimal eigenvalue of an operator. This result implies that any two-qubit and any qubit-qutrit entangled state is $k$-copy steerable. Therefore, given previous results constructing LHS models for such entangled states \cite{wiseman07,bowles13,sania14,bowles16,hirsch15,cavalcanti15,nguyen16}, we obtain many examples of super-activation of steering. Our results also allows us to put bounds on the minimal number of copies required for super-activation of steering and nonlocality. We discuss the connection of our results to entanglement distillation and conclude with some open questions. More generally, our results suggest that $k$-copy steerability may be generic for all entangled states (we prove it for $2 \times 2$ and $2 \times 3$ entangled states). This would demonstrate the equivalence of entanglement and steering in the multi copy scenario.

\section{Bell Nonlocality and Quantum Steering}

 Consider two distant observers, Alice and Bob, that can perform local measurements on a shared entangled quantum state $\rho$.	 			%
 	   As standard, quantum measurements are described by the set of POVMs  $\{  A_{a | x}\}$ and $\{B_{b  |y}\}$, positive operators that sum to identity,  where $  x$ and $ y $  represent the choice of the measurements and  $a$ and $b$ their corresponding  outcomes. The resulting statistics are given by
\begin{equation}
p(ab| xy) = \tr(\rho A_{a\vert x} \otimes B_{b| y})   .
\end{equation}
The above distribution is Bell local when it admits a decomposition of the form%
\begin{equation} \label{bell}
		p(ab   \vert xy) = \int   \pi(\lambda) p_A(a\vert x,\lambda)p_B(b\vert y,\lambda) \mathrm{d} \lambda ,
\end{equation}
	where $\lambda $ is some classical random variable, distributed according to density $\pi(\lambda)$, and ${p_A(a\vert x,\lambda) }$ and $p_B(b\vert y,\lambda)$ represent the local response functions. A quantum state $\rho$ is said to be Bell local, or equivalently to admit a LHV model, if its statistics admit a decomposition of the form of Eq. \ref{bell} for \textit{all} possible measurements. %
	If such a decomposition does not
	 exist%
	 ,
	  the state is Bell nonlocal and violates at least one Bell inequality for some choice of local measurements $\{A_{a|x}\}$ and $\{B_{b|y}\}$ \cite{bell64,brunner_review}.

	A different notion of nonlocality is that of steering. In a steering test, Bob, who does not trust Alice, wants to verify that a state $\rho$ is entangled. To this end, he asks Alice to perform measurement $x$ on her subsystem and announce its result $a$. By doing so, she remotely steers the state of Bob's system to
			\begin{equation}\label{assemblage}
				\sigma_{a | x}=\tr_A(A_{a | x}\otimes I \rho),
			\end{equation}
where $\tr_A$ denotes the partial trace over Alice's system. Bob's task is now to ensure that the set of unnormalized conditional states $\{\sigma_{a\vert x}\}$, a so-called \textit{assemblage}, does not admit a decomposition of the form 
\begin{equation} \label{uns}
\sigma_{a| x}=  \int \pi(\lambda) p_A(a  |x, \lambda)\rho_\lambda \mathrm{d} \lambda,
\end{equation}
where $\lambda$ is a classical random variable distributed according to $\pi(\lambda)$,  $p_A(a\vert x, \lambda)$ represents Alice's response function, and $\rho_\lambda$ the (hidden) quantum state. If the assemblage observed by Bob (that can be obtained via quantum tomography%
) does admit a decomposition \eqref{uns}, Bob concludes that Alice could have cheated by using the following strategy: Alice sends the single party, hence not entangled, quantum state $\rho_\lambda$ to Bob, and announces the measurement outcome $a$ according to $p_A(a\vert x, \lambda)$. 
Note that $\lambda$ can be understood as Alice's choice of strategy. 
If an entangled state $\rho$ admits a decomposition of the form of Eq. \eqref{uns} for quantum measurements, the state is termed as unsteerable (or equivalently, it admits a LHS model).
 However, if the assemblage $\{\sigma_{a  |x}\}$ does not admit a decomposition of the form \eqref{uns} for some set of measurements $A_{a | x}$, the state is said to be steerable and it violates a steering inequality \cite{cavalcanti09}.

We note that a LHS model is a particular case of a LHV model in which Bob's response functions respects the trace rule imposed by quantum mechanics
(see \cite{augusiak_review}).
 Hence, states admitting a LHS model always admit a LHV one, whereas the converse does not hold in general \cite{quintino15}. We also point out that the asymmetry of steering allows the existence of entangled states which are only one-way steerable \cite{bowles14,quintino15}.

In the next sections, we will make use of a quantification of bipartite quantum entanglement. The \textit{maximal entanglement fraction} \cite{horodecki98} $F$ of a state $\rho$ is defined as the maximal overlap of $\rho$ with any maximally entangled state. More precisely,
\begin{equation}
F(\rho):= \max_{U_A,U_B} \tr(\rho U_A\otimes U_B \ketbra{\phi_d^+}{\phi_d^+}U_A^\dagger \otimes U_B^\dagger),
\end{equation}
where $U_A$ and $U_B$ are local unitary operators and $\ket{\phi_d^+}= \frac{1}{\sqrt{d}} \sum_i\ket{ii}$ the $d$-dimensional maximally entangled state.

The concept of maximal entanglement fraction is intimately related to \textit{isotropic states} \cite{horodecki_reduction}, quantum states with a high level of symmetry. A $d$-dimensional isotropic state with entanglement fraction $F$ is defined as 
		\begin{equation}
		ISO_d(F)=F \ketbra{\phi^+_d}{\phi^+_d} + (1-F) \frac{I_{d\times d}-\ketbra{\phi^+_d}{\phi^+_d}}{d^2-1}.
		\end{equation}	
	It can be shown that any quantum state $\rho$ acting in $\mathbb{C}^d\otimes \mathbb{C}^d$ with maximal entanglement fraction $F$ can be transformed into a $d$-dimensional isotropic state with entanglement fraction $F$ by local unitary operations and shared randomness via an \textit{isotropic twirling} operation \cite{horodecki_reduction}. Since local unitary transformation and shared randomness cannot create nonlocality (steering nor Bell), it follows that if a state $\rho$ can be transformed into a steerable (Bell nonlocal) isotropic state, $\rho$ is guaranteed to be steerable (Bell nonlocal).

\section{Steerability of multiple copies}

Our goal now is to discuss steering in the $k$-copy scenario. Specifically, given an initial entangled state $\rho$, we will consider the assemblage 
\begin{equation}\label{assemblage2}
				\sigma_{a | x}=\tr_A( A_{a | x}\otimes I \rho^{\otimes k}),
			\end{equation}
			where $A_{a | x}$ now represents a local measurement of Alice performed on all $k$ subsystems she holds. In particular, these can be any possible joint measurement. In case one can find a set of measurements $A_{a | x}$ such that the resulting assemblage is steerable, we say that the state $\rho$ is $k$-copy steerable (from Alice to Bob).

	 We now present our main result, which is a simple and general criterion for ensuring that an entangled state $\rho$ is $k$-copy
	 steerable. In particular, this result then implies that $k$-copy steerability is generic for two-qubit and qubit-qutrit entangled states.

\begin{theorem} \label{super}
	All quantum states $\rho$ acting on $\mathbb{C}^d\otimes \mathbb{C}^d$ for which the operator \begin{equation}
	I_d\otimes\rho_B-\rho 
	\end{equation}%
	is not positive definite (\ie, $\rho$ violates the reduction criterion) are $k$-copy steerable from Alice to Bob for some number of copies $k$.
	
	Moreover, let $F$ be the maximal entanglement fraction of $\rho$. Then, $\rho$ is $k$-copy steerable whenever
		\begin{equation}  \label{alternative_criterion}
			F^k> \left[(1+d^k)\left(\sum_{i=1}^{d^k}\frac{1}{i}-1\right)-d^k\right] \frac{1}{d^{2k}	}.
		\end{equation}
\end{theorem}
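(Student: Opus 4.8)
The plan is to reduce the $k$-copy steering problem to a single-copy steering problem for an isotropic state, and then invoke a known local-hidden-state bound for isotropic states. The key observation is that if the parties share $k$ copies of $\rho$, then Alice holds a $d^k$-dimensional system (the tensor product of her $k$ qudits) and likewise Bob, and the global state is $\rho^{\otimes k}$, a state in $\mathbb{C}^{d^k}\otimes\mathbb{C}^{d^k}$. So I would first bound the maximal entanglement fraction of $\rho^{\otimes k}$ from below: picking the product of the optimal local unitaries for each copy gives overlap $F^k$ with $\ket{\phi^+_{d^k}}=\ket{\phi^+_d}^{\otimes k}$ (up to local unitaries), hence $F(\rho^{\otimes k})\ge F(\rho)^k$. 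By the isotropic-twirling argument recalled in the previous section, $\rho^{\otimes k}$ can be converted by local unitaries and shared randomness into $ISO_{d^k}(F')$ for some $F'\ge F^k$, and since those operations cannot create steering, it suffices to show that $ISO_{d^k}(F')$ is steerable when $F^k$ exceeds the stated threshold. Monotonicity in the entanglement fraction (higher $F$ only helps) means it is enough to treat $F'=F^k$.

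Next I would bring in the \textbf{reduction criterion} part. The operator $I_{d^k}\otimes(\rho^{\otimes k})_B - \rho^{\otimes k}$ equals $(I_d\otimes\rho_B-\rho)^{\otimes\text{-ish}}$... more precisely one checks that if $M:=I_d\otimes\rho_B-\rho$ fails to be positive definite (has a negative eigenvalue $-\mu<0$ on some vector $\ket{v}$), then for $\rho^{\otimes k}$ the analogous operator has a negative eigenvalue as well (take $\ket{v}^{\otimes k}$ type arguments, or more carefully note that violating the reduction criterion implies a nonzero negativity/distillability-type quantity that survives tensoring). The cleanest route: states violating the reduction criterion have entanglement fraction $F>1/d$ (this is essentially Horodecki's observation that the reduction criterion is violated iff $F(\rho)>1/d$ after optimal local filtering, or directly $F(\rho)\ge 1/d$ with equality governing the criterion), so $F^k$ can be made to eventually beat the right-hand side of \eqref{alternative_criterion} because that right-hand side behaves like $(\ln d^k)/d^k\to 0$ while $F^k/(1/d^k)=(dF)^k\to\infty$ grows geometrically when $dF>1$. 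Thus the first claim follows from the second plus the fact that violating the reduction criterion forces $dF>1$.

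Finally, the core quantitative input: a threshold on $F$ above which the isotropic state $ISO_n(F)$ in $\mathbb{C}^n\otimes\mathbb{C}^n$ is steerable from Alice to Bob. Here I would use the known construction of a local-hidden-state model for isotropic states (e.g.\ from \cite{wiseman07} and refinements), which shows $ISO_n(F)$ admits an LHS model — hence is unsteerable — precisely when $F\le \big[(1+n)(H_n-1)-n\big]/n^2$ with $H_n=\sum_{i=1}^n 1/i$ the harmonic number; equivalently, $ISO_n(F)$ is steerable whenever $F$ exceeds that bound. Substituting $n=d^k$ and $F\mapsto F^k$ gives exactly inequality \eqref{alternative_criterion}. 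So the proof is: (i) $F(\rho^{\otimes k})\ge F^k$; (ii) isotropic twirling reduces to $ISO_{d^k}(F^k)$; (iii) apply the isotropic LHS-model threshold; and separately (iv) reduction-criterion violation $\Rightarrow dF>1 \Rightarrow$ the geometric growth of $(dF)^k$ eventually clears the logarithmically-vanishing threshold, giving $k$-copy steerability for some finite $k$.

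The main obstacle I expect is pinning down the exact steerability threshold for isotropic states in arbitrary dimension $n$ — i.e.\ justifying that the harmonic-number expression is not merely sufficient for an LHS model but is the relevant bound to quote — and making sure the direction of the inequality and the normalization of $ISO_n$ match the convention in the excerpt; this is a matter of carefully citing/reproducing the isotropic LHS construction rather than a conceptual difficulty. A secondary technical point is confirming $F(\rho^{\otimes k})\ge F(\rho)^k$ rigorously (it is, since the optimal local unitaries tensor up and $\ket{\phi^+_d}^{\otimes k}$ is local-unitarily equivalent to $\ket{\phi^+_{d^k}}$), and that the reduction-criterion violation indeed yields $F(\rho)>1/d$ so that the asymptotic argument closes.
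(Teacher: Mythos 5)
Your derivation of the quantitative condition \eqref{alternative_criterion} is essentially the paper's own argument: bound $F(\rho^{\otimes k})\ge F(\rho)^k$, twirl $\rho^{\otimes k}$ to $ISO_{d^k}(F^k)$, and invoke the necessary-and-sufficient projective-measurement steering threshold for isotropic states from Ref.~\cite{wiseman07}. That part goes through (the only thing to pin down is the exact harmonic-number expression, where the paper itself is not fully consistent between the theorem and the criterion it quotes in the proof).

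The genuine gap is in your treatment of the first claim. You offer two readings of the link between the reduction criterion and the entanglement fraction, and the one you lean on — that non-positivity of $I_d\otimes\rho_B-\rho$ ``directly'' gives $F(\rho)\ge 1/d$ with equality governing the criterion — is false. Violation of the reduction criterion only guarantees a vector $\ket{\psi}$ with $\bra{\psi}\rho\ket{\psi}>\tr(\rho_B\psi_B)$; it does not bound the overlap with a maximally entangled state. Already for two qubits, every entangled state violates the reduction criterion, yet there exist entangled two-qubit states with fully entangled fraction at most $1/2$ (these are exactly the states that need filtering before they become useful for teleportation). So the local filter is unavoidable, and that is where the real content of the first claim sits: one must (i) construct an explicit filter $F_B$ acting on Bob's side that maps $\ket{\phi^+_d}$ onto $\ket{\psi}$, so that the filtered state has $F>1/d$ (this is the paper's Lemma~\ref{red2} in the Appendix, taken from Ref.~\cite{horodecki_reduction}), and (ii) invoke the fact that positive maps — in particular local filters — on the \emph{trusted} side preserve unsteerability from Alice to Bob (the paper's Lemma~1). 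Only with (ii) does $k$-copy steerability of the filtered state transfer back to $\rho^{\otimes k}$; without it, knowing that some filtered version of $\rho$ is $k$-copy steerable tells you nothing about $\rho$ itself, since filtering is not an operation available in the steering test. Your proposal never states or uses this preservation lemma, and your alternative sketch via a tensor-power structure of $I_d\otimes\rho_B-\rho$ does not lead anywhere and is not needed. Note also that the placement of the filter on Bob's side is what makes the conclusion a one-way statement (``steerable from Alice to Bob'', and ``from the qubit to the qutrit part'' in the Corollary); this directional bookkeeping is absent from your argument.
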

		
		We note that the \textit{reduction criterion} \cite{horodecki_reduction,guhne08} gives a sufficient condition for entanglement distillability. The connection to this topic will be discussed in Section \ref{distillation}.
		
\begin{proof}
	
		The first step of the proof makes use of properties of assemblages admitting a LHS model. Specifically, we recall that local unitary operations, shared randomness, and local filtering operations on Bob's side preserve unsteerability \cite{quintino15,aolita14,bowles16}. 
		More precisely, one has the following statement. 
		
		\begin{lemma}
		Let $\rho$ be an entangled state, unsteerable from Alice to Bob.
For any local operation represented by a positive map $\Lambda$ acting on Bob's system, the final
state
\begin{equation}\label{filtered}
\rho_{F}=\frac{(I\otimes\Lambda)(\rho)}{\tr[(I\otimes\Lambda)(\rho)]}
\end{equation}
is unsteerable from Alice to Bob.
\end{lemma}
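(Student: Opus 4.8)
The plan is to verify the definition of unsteerability directly on $\rho_F$, transporting the local hidden state model of $\rho$ through the map $\Lambda$. Fix an arbitrary measurement assemblage $\{A_{a|x}\}$ on Alice's side. The first step is the elementary identity
\begin{equation}
\tr_A\big[(A_{a|x}\otimes I)\,(I\otimes\Lambda)(\rho)\big]=\Lambda\big(\tr_A[(A_{a|x}\otimes I)\rho]\big)=\Lambda(\sigma_{a|x}),
\end{equation}
which holds because $\Lambda$ acts on Bob's tensor factor while $\tr_A$ and the multiplication by $A_{a|x}$ act on Alice's, so the two operations commute (a one-line check in components, using linearity of $\Lambda$). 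Dividing by $N:=\tr[(I\otimes\Lambda)(\rho)]$, which is strictly positive whenever $\rho_F$ is well defined, this says that the assemblage prepared for Bob by the measurement $\{A_{a|x}\}$ acting on $\rho_F$ equals $\Lambda(\sigma_{a|x})/N$.

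Next I would invoke the hypothesis. Since $\rho$ is unsteerable from Alice to Bob, the assemblage $\{\sigma_{a|x}\}$ admits a decomposition of the form \eqref{uns}, namely $\sigma_{a|x}=\int\pi(\lambda)\,p_A(a|x,\lambda)\,\rho_\lambda\,\mathrm{d}\lambda$, with a single ensemble $\{\pi(\lambda),\rho_\lambda\}$ valid for every $x$. Applying the linear map $\Lambda$ and using the identity above,
\begin{equation}
\frac{\Lambda(\sigma_{a|x})}{N}=\frac{1}{N}\int\pi(\lambda)\,p_A(a|x,\lambda)\,\Lambda(\rho_\lambda)\,\mathrm{d}\lambda .
\end{equation}
Here positivity of $\Lambda$ is used: each $\Lambda(\rho_\lambda)$ is again a positive, subnormalised operator. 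Writing $q_\lambda:=\tr\Lambda(\rho_\lambda)\ge 0$, setting $\tilde\rho_\lambda:=\Lambda(\rho_\lambda)/q_\lambda$ on the set where $q_\lambda>0$ (the complement contributes nothing to the integral) and $\tilde\pi(\lambda):=\pi(\lambda)\,q_\lambda/N$, one checks that $\tilde\pi$ is a genuine probability density, since $\int\pi(\lambda)\,q_\lambda\,\mathrm{d}\lambda=\tr[\Lambda(\rho_B)]=N$ with $\rho_B=\tr_A\rho$. Hence the assemblage produced by $\{A_{a|x}\}$ on $\rho_F$ equals $\int\tilde\pi(\lambda)\,p_A(a|x,\lambda)\,\tilde\rho_\lambda\,\mathrm{d}\lambda$, a decomposition of the form \eqref{uns} with the \emph{same} response functions $p_A(a|x,\lambda)$. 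As $\{A_{a|x}\}$ was arbitrary, $\rho_F$ is unsteerable from Alice to Bob.

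There is no real obstacle here; the care required is purely bookkeeping — checking that $\tr_A$ commutes with $I\otimes\Lambda$, handling the harmless values of $\lambda$ with $\Lambda(\rho_\lambda)=0$, and renormalising correctly. The one point worth emphasising is that the argument uses only that $\Lambda$ is \emph{positive}: it is applied to the single-party hidden states $\rho_\lambda$, never to one share of an entangled state, so complete positivity is never needed. In particular this shows that even positive-but-not-completely-positive maps on the trusted party, such as transposition, preserve unsteerability. The analogous invariance of unsteerability under local unitaries and under mixing with shared randomness follows from the same, even simpler, manipulations.
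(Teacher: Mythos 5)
Your proof is correct and complete: the identity $\tr_A[(A_{a|x}\otimes I)(I\otimes\Lambda)(\rho)]=\Lambda(\sigma_{a|x})$, the renormalisation $\tilde\pi(\lambda)=\pi(\lambda)\tr[\Lambda(\rho_\lambda)]/N$ with $\int\pi(\lambda)\tr[\Lambda(\rho_\lambda)]\,\mathrm{d}\lambda=\tr[\Lambda(\rho_B)]=N$, and the observation that only positivity (not complete positivity) of $\Lambda$ is needed, since it is only ever applied to the single-party hidden states $\rho_\lambda$, are all exactly right. The paper itself omits the argument and defers to Refs.~\cite{quintino15,bowles16}; what you have written is precisely the standard proof given there, so there is nothing to add.
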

		
		We refer the reader to Refs \cite{quintino15,bowles16} for the proof.

The second step consists in using the reduction criterion. Note that any entangled state $\rho$ for which the operator $		I_d\otimes\rho_B-\rho $ is not positive definite, can be transformed into a $d$-dimensional isotropic state $ISO_d(F)$ with $F>1/d$ via an isotropic twirling operation, which consists of local unitaries and shared randomness \cite{horodecki_reduction}. 
Therefore, for any state $\rho$ violating the reduction criterion, we can construct a local filter on Bob's side that will map $\rho$ to an isotropic state $ISO_d(F)$ with $F>1/d$ (after isotropic twirling); see Appendix for details on the local filter. Since the latter is $k$-copy Bell nonlocal, following the result of Ref. \cite{cavalcanti12}, it follows that it is also $k$-copy steerable; this is because Bell nonlocality always implies steering. Therefore, it follows from Lemma 1 that $\rho$ is $k$-copy steerable.

		We also give a more refined proof that requires in general less copies of the initial state $\rho$. We
		observe that $k$-copies of a $d$-dimensional isotropic state have the form
		\begin{align}
		\left[ISO_d(F)\right]^{\otimes k} =& F^k \ketbra{\phi^+_d}{\phi^+_d}^{\otimes k} + (1-F^k) \rho_R,\\
		=& F^k \ketbra{\phi^+_{d^k}}{\phi^+_{d^k}}  + (1-F^k) \rho_R ,
		\end{align}
		where the state $\rho_R$ is orthogonal to the maximally entangled one, \ie, $\tr \left(\ketbra{\phi^+_{d^k}}{\phi^+_{d^k}} \rho_R\right)=0$. 
		Hence, we can transform $k$ copies of a $d$-dimensional isotropic state with entanglement fraction $F$ into a single copy of a $d^k$-dimensional isotropic state with entanglement fraction $F^k$ via an isotropic twirling operation, that is,
		\begin{equation}
		\left[ISO_d(F)\right]^{\otimes k} \mapsto 	ISO_{d^k} (F^k).
		\end{equation}
	
	Next, we exploit a result given in Ref. \cite{wiseman07}, where the authors presented a necessary and sufficient condition for steerability of a $d$-dimensional isotropic state, considering all projective measurements. This criterion can be expressed in terms of the maximal entanglement fraction. Specifically, the state is steerable if and only if
	\begin{equation}
	F> F_{proj}=\frac{\left[(1+d) \sum_{n=1}^d \frac{1}{n} \right]-d} {d^2}.
	\end{equation} 
	Applied to our case, we obtain a sufficient condition for $k$-copy steerability  of the initial state. More precisely, any state $\rho$ with maximal entanglement fraction $F$ is $k$-copy steerable whenever
			\begin{equation}  
			F^k> \frac{(1+d^k)(\sum_{i=1}^{d^k}\frac{1}{i}-1)-d^k }{d^{2k}	}.
			\end{equation}
\end{proof}

	The above result leads to a general result for all qubit-qubit and qubit-qutrit entangled states, as all these states violate the reduction criterion \cite{horodecki_reduction,horodecki_ppt}. Hence we have that
\begin{corollary}
		All entangled $2 \times 2$ entangled state are $k$-copy (two-way) steerable, and all $2 \times 3$ entangled states are $k$-copy steerable (from the qubit to the qutrit part).
\end{corollary}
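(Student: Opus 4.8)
The plan is to derive the corollary directly from Theorem~\ref{super}: once we know that every entangled state on $\mathbb{C}^2\otimes\mathbb{C}^2$ or $\mathbb{C}^2\otimes\mathbb{C}^3$ violates the reduction criterion on the qubit side (and, for $2\times 2$, on both sides), the theorem immediately supplies the $k$-copy steerability. The only thing to do is to establish that reduction-criterion violation, and for this I would connect the reduction map on a qubit with transposition and then invoke the Peres--Horodecki result on PPT and separability in low dimensions.

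Concretely, the first step is the elementary identity $\tr(X)\,I-X=\sigma_y X^{T}\sigma_y$, valid for every $2\times 2$ matrix $X$ (a one-line check). Writing the reduction-criterion operator as the reduction map $\Lambda(\cdot)=\tr(\cdot)I-(\cdot)$ applied to Alice's subsystem, namely $I_d\otimes\rho_B-\rho=(\Lambda_A\otimes\mathrm{id})(\rho)$, this identity gives, whenever Alice holds a qubit, $I_d\otimes\rho_B-\rho=(\sigma_y\otimes I)\,\rho^{T_A}\,(\sigma_y\otimes I)$. Hence the reduction-criterion operator is unitarily equivalent to the partial transpose $\rho^{T_A}$ and in particular shares its spectrum.

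The second step is to recall that for $d_A d_B\le 6$, i.e.\ precisely for $2\times 2$ and $2\times 3$ systems, a state is separable if and only if it is PPT \cite{horodecki_ppt}. Thus any entangled $\rho$ in these dimensions has $\rho^{T_A}$ with a strictly negative eigenvalue, and by the spectral equivalence above so does $I_d\otimes\rho_B-\rho$; in particular this operator is not positive definite. Theorem~\ref{super} then yields that $\rho$ is $k$-copy steerable from Alice to Bob for some $k$. In the $2\times 3$ case, taking Alice to be the qubit part gives the claimed steerability from the qubit to the qutrit. In the $2\times 2$ case, since $\rho^{T_A}$ and $\rho^{T_B}=(\rho^{T_A})^{T}$ always have the same spectrum, running the same argument with the two subsystems interchanged shows $I_2\otimes\rho_A-\rho$ is likewise not positive definite, and a second application of Theorem~\ref{super} gives steerability from Bob to Alice as well; hence two-way $k$-copy steerability.

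There is no real obstacle here: the mathematical content is entirely carried by Theorem~\ref{super}, and the remaining points are bookkeeping rather than depth. One must apply the reduction map on the side corresponding to the untrusted party (Alice); one must note that the qutrit-side reduction map is \emph{not} a transposition, so the qubit-side argument delivers only the one steering direction claimed in the $2\times 3$ case; and one should observe that the hypothesis ``$I_d\otimes\rho_B-\rho$ not positive definite'' in Theorem~\ref{super} is comfortably met, since NPT provides a strictly negative eigenvalue.
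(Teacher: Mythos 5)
Your proposal is correct and follows the same route as the paper: the corollary is obtained by feeding the reduction-criterion violation of all entangled $2\times 2$ and $2\times 3$ states into Theorem~\ref{super}, the only difference being that the paper delegates that violation to the citations \cite{horodecki_reduction,horodecki_ppt} while you spell out the standard argument (the identity $\tr(X)I-X=\sigma_y X^{T}\sigma_y$ making the qubit-side reduction operator unitarily equivalent to $\rho^{T_A}$, combined with the Peres--Horodecki separability criterion). Your remarks on applying the map on the untrusted (qubit) side and on the symmetry giving two-way steerability in the $2\times 2$ case are exactly the bookkeeping the paper leaves implicit.
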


	\section{Super-Activation of Quantum Steering}

	In the previous sections we have discussed how to reveal steering of a quantum state $\rho$ by performing measurements on many copies copies of it. We are thus now all set up to address the question of super-activation. That is, we look for states which admit a LHS model in the single copy regime but violate a steering inequality when $k$ copies are considered. More precisely we focus on the question: ``\emph{Can we find a state $\rho$ which admits a LHS model, such that $\rho^{\otimes k}$ becomes steerable?}''
	
Already the results of the previous section can directly be employed to reveal super-activation of quantum steering. First of all, the fact that every entangled $2\times 2$ state is steerable for some number of copies $k$, together with the fact that there are numerous $2 \times 2$ entangled state admitting a LHS model \cite{wiseman07,bowles13,sania14,bowles16,hirsch15,cavalcanti15,nguyen16} is sufficient to show super-activation. The same is true of course for the case of $2 \times 3$ entangled states admitting a LHS model (from Alice to Bob). More generally, any entangled state admitting a LHS and violating the reduction criterion will feature super-activation of steering. 

Moreover, we note that the mere existence of the super-activation of steering already follows from the works of Refs \cite{palazuelos12} and \cite{cavalcanti12}. This is due to the fact that (i) $k$-copy nonlocality implies $k$-copy steerability (two-way), and (ii) some of the entangled states discussed in Refs \cite{palazuelos12,cavalcanti12} admit a LHS model. Nevertheless, the present results are stronger, in the sense that our criterion can detect much larger classes of entangled states compared to Refs \cite{palazuelos12,cavalcanti12}. For instance, our criterion detects all two-qubit entangled states, which is not the case for Refs \cite{palazuelos12,cavalcanti12}. Moreover, our results also provide much stronger bounds on the minimal number of copies $k$ required for super-activation as we will discuss now.

\subsection{Few copies and low dimensions}

While we have discussed above several examples of the super-activation of steering, we ask now a more sophisticated questions, looking for minimal examples of super-activation. Specifically, we ask the question: ``\emph{Let $\rho$ be a state that can be super-activated. What is the minimum number of copies $k$ for the final state  $\rho^{\otimes k}$ to be steerable?}''
Below we give example of super-activation of steering with few copies and construct an example where $k=2$ copies are enough. Moreover, we also give examples for states of low dimension.

	We start with the $d$-dimensional isotropic state with entanglement fraction $F$ 
	\begin{equation}
		ISO_d=F \ketbra{\phi^+_d}{\phi^+_d} + (1-F) \frac{I_{d\times d}-\ketbra{\phi^+_d}{\phi^+_d}}{d^2-1}
	\end{equation}
 that	is known to have an LHS model for projective measurements if and only if \cite{wiseman07} 
		\begin{equation}\label{proj}
		F\leq \frac{\left[(1+d) \sum_{n=1}^d \frac{1}{n} \right]-d} {d^2},
		\end{equation}
		and to have a LHS model for POVMs if \cite{almeida07}
	\begin{equation}\label{povm}
	 F\leq \frac{1+\left(\frac{d+1}{d}\right)^d(3d-1)}{d^2}.
	\end{equation}
	Using theorem \ref{super} one can explicitly find sufficient requirements on the dimension $d$ and number of copies $k$ for obtaining super-activation of steering. We start with the case $d=2$. From the above inequalities \eqref{proj} and \eqref{povm} we see that if Alice and Bob share $k$ copies of a two-qubit isotropic state, they can super-activate steering for projective measurements with $k=7$ copies, and super-activate steering for POVMs with $k=24$ copies. Next we discuss the case of only $k=2$ copies. Here, considering projective measurements, we get that steering is super-activated for isotropic states of dimension $d \geq 6$. However, when considering POVMs, the combination of theorem \ref{super} and condition \eqref{povm} does not allows us to prove super-activation directly.

 Moreover, we remark that the criterion for $k$-copies steerability presented in equation \eqref{alternative_criterion} is always more ``economical'' in number of copies than the criterion presented in Ref. \cite{cavalcanti12}. This is due to the fact that both proof methods rely on the nonlocality of an isotropic state, but while Ref. \cite{cavalcanti12} focuses on a specific Bell inequality, namely the Khot-Vishonoi game \cite{KV,buhrman10}, our method considers \textit{all} steering inequalities for projective measurements via the necessary and sufficient criterion presented in \cite{wiseman07}.

	Before finishing this section we point out a simple method to show super-activation of steering (or Bell nonlocality) for the scenario where the parties share only two copies of an initial state $\rho$. That is, there exits $\rho$ which has a LHS (LHV) model for general POVMs and $\rho\otimes \rho$ violates a steering (Bell) inequality. While this method is implicitly suggested in Palazuelos' paper \cite{palazuelos12}, we present it here in full detail for completeness 
	\begin{enumerate}
		\item Choose $\rho$ acting on $\mathbb{C}^d\otimes\mathbb{C}^d$ that has a LHS (LHV) model for POVMs, and that is $k$-copy steerable (Bell nonlocal).
		\item There exists a critical number of copies $k_C>1$ such that $\rho^{\otimes k_c}$ is unsteerable (Bell local), but $\rho^{\otimes k_c+1}$ is steerable (Bell nonlocal)
		\item By construction, the new state $\rho'=\rho^{\otimes k_c}$ is unsteerable (Bell local) but two copies of it violate a steering (Bell) inequality.
	\end{enumerate}
	Following the above protocol, one can transform any example of $k$-copy super-activation of steering (Bell nonlocality) into another one that requires only 2 copies. The ``price'' to pay for this construction is that the dimension of the quantum state $\rho'=\rho^{\otimes k_c}$ that exhibits super-activation with two copies becomes larger than the initial one.
	
	\section{Connection to Entanglement Distillation} \label{distillation}
	
	Two parties sharing many copies of an entangled state $\rho$ and having access to local operations and classical communication (LOCC) can in some cases obtain (or more precisely, arbitrarily approximate) a maximally entangled state via an \textit{entanglement distillation protocol} \cite{bennett96,horodecki_review}.

	Clearly, entanglement distillation shares similarities with the scenario of $k$-copy steering. The notable difference is however that LOCC is not considered in the latter. Thus, the fact that a given state is distillable does not necessarily imply that it will lead to $k$-copy steering, since LOCC is not considered in $k$-copy steering.

	More generally, the link between entanglement distillation and nonlocality in asymptotic scenarios has been discussed. First, it was conjectured by Peres \cite{peres96} that all entangled but undistillable states, \ie, bound entangled states, admit a LHV model Bell even in the $k$-copy regime with local pre-processing. This conjecture was further strengthened to the case of an LHS model \cite{pusey13}. Nevertheless, these conjectures are known to be false since there exist bound entangled states that can lead to steering \cite{moroder14} and Bell nonlocality \cite{vertesi14}. In fact, these examples consider the standard single copy scenario. This shows that entanglement distillation and nonlocality are different in general.

	\subsection{Connection to One-Way Distillation}
	
	It is worth pointing out a connection between \textit{one-way entanglement distillation} and $k$-copy steering. A state $\rho$ is said to be one-way distillable when it can be distilled via only one-way communication, say from Alice to Bob. 
	
	The connection goes as follows. The existence for a one-way distillation protocol for some state $\rho$ always implies that a maximally entangled state can also be obtained via a local filtering operation on one side only \cite{horodecki_review}. Since this class of operations cannot generate steering (see Lemma 1), it follows that \textit{all one-way distillable states are $k$-copy steerable}. 
	
	This allows us to detect $k$-copy steerable states using known criterion of one-way distillability, such as the one-way hashing protocol \cite{devetak05}. The latter says that any state $\rho$ that satisfies $S(\rho_{B})>S(\rho)$, with $S(\rho)$ being the Von-Neumann entropy of $\rho$, are one-way distillable. Hence, any state $\rho$ detected by the one-way hashing protocol is $k$-copy steerable from Alice to Bob.

\section{Discussion}

	In this paper we have explored steering in a scenario where the parties can perform joint measurements on many copies of a given entangled state. We gave a general and simple criterion to detect entangled states that exhibit $k$-copy steering. Notably, this criterion detects all entangled states in dimension $2 \times 2$ and $2 \times 3$. Moreover, we discussed several examples of super-activation of steering, some of which involving few copies and low dimensions. Finally, we connected this problem to entanglement distillation. 
	
	The natural question now would be: are all entangled states k-copy steerable? Although
we answered this question positively for the particular case of qubit-qubit and qubit-qutrit systems, there may exist some entangled qudit-qudit states that admits a LHS or LHV model even when an arbitrary number of copies is considered. This is similar in spirit to the still unresolved question of $k$-copy non-locality. While previous conjectures ruled out the possibility based on a connection to distillability \cite{peres96,pusey13}, the fact that undistillable states can exhibit steering \cite{moroder14} and even non-locality \cite{vertesi14} implies that the question is still essentially open.
	
	
	Another question is whether super-activation of steering necessarily implies the super-activation of nonlocality. For instance, could one find an entangled two-qubit state admitting an LHV model for any number of copies? The connection to entanglement distillability also deserves interest. Are all distillable states $k$-copy steerable (nonlocal)?

\section{Acknowledgements.}

 We acknowledge discussions Yeong-Cherng Liang, Carlos Palazuelos, Andreas Winter. MTQ and NB acknowledge financial support from the Swiss National Science Foundation (Starting grant DIAQ). MH acknowledges funding from the Swiss National Science Foundation (AMBIZIONE PZ00P2$\_$161351) and the Austrian Science Fund (FWF) through the START project Y879-N27.

\section*{Note added}
While completing this manuscript, we
became aware of the work of \cite{hsieh16} who independently
derived a sufficient condition for steerability in terms of
the fully entangled fraction and demonstrated the
super-activation of steering for the isotropic states.

\appendix

\section{More Details On The proof of  Theorem \ref{super}}
	We now reproduce a lemma first presented in Ref. \cite{horodecki_reduction} that we used for proving our main result.
\begin{lemma} \label{red2}
	Let $\rho$ be a bipartite state acting on $\mathbb{C}^{d}\otimes \mathbb{C}^{d}$  which	 $I\otimes \rho_B - \rho$ is not positive definite. There exists a local filter operation $F_B$ such that the filtered state
	\begin{align}
	\rho'=\frac{I\otimes F_B \rho I\otimes F_B^\dagger }{\tr(I\otimes F_B \rho I\otimes F_B^\dagger )}
	\end{align}
	satisfies	$\tr(\rho'\ketbra{\phi^+_d}{\phi^+_d)>1/d})>1/d$. 
\end{lemma}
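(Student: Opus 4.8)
The plan is to read a one-sided filter directly off the vector that witnesses the failure of positivity, via the operator--vector (Choi) correspondence. By hypothesis $\rho$ violates the reduction criterion, so (reading this, as usual, as $I\otimes\rho_B-\rho$ having a strictly negative eigenvalue) there is a vector $\ket{\psi}\in\mathbb{C}^d\otimes\mathbb{C}^d$ with
\[
\exval{\psi}{I\otimes\rho_B-\rho}{\psi}<0,
\qquad\text{equivalently}\qquad
\exval{\psi}{\rho}{\psi}>\exval{\psi}{I\otimes\rho_B}{\psi}\ge 0 .
\]
Writing $\ket{\psi}=\sum_{i,j}\psi_{ij}\ket{ij}$ and letting $\Psi$ be the matrix with entries $\psi_{ij}$, set $M:=\sqrt{d}\,\Psi^{T}$, an operator on Bob's space; then $\ket{\psi}=(I\otimes M)\ket{\phi^+_d}$. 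I would take the filter to be $F_B:=M^{\dagger}$ (rescaled to $\lVert F_B\rVert\le 1$ if desired, which changes nothing below), so that $(I\otimes F_B)\,\rho\,(I\otimes F_B^{\dagger})=(I\otimes M^{\dagger})\,\rho\,(I\otimes M)$.

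Everything then reduces to two elementary identities. For the numerator, since $(I\otimes M)\ket{\phi^+_d}=\ket{\psi}$,
\[
\tr\!\big[(I\otimes M^{\dagger})\rho(I\otimes M)\,\ketbra{\phi^+_d}{\phi^+_d}\big]=\exval{\psi}{\rho}{\psi}.
\]
For the normalisation, using $\exval{\phi^+_d}{I\otimes N}{\phi^+_d}=\tfrac1d\tr N$ (valid for any $N$ on $\mathbb{C}^d$),
\[
\tr\!\big[(I\otimes M^{\dagger})\rho(I\otimes M)\big]
=\tr\!\big[\rho_B\,MM^{\dagger}\big]
=d\,\exval{\phi^+_d}{I\otimes(M^{\dagger}\rho_B M)}{\phi^+_d}
=d\,\exval{\psi}{I\otimes\rho_B}{\psi}.
\]
Dividing, $\tr\!\big(\rho'\,\ketbra{\phi^+_d}{\phi^+_d}\big)=\exval{\psi}{\rho}{\psi}\big/\big(d\,\exval{\psi}{I\otimes\rho_B}{\psi}\big)$, which is strictly greater than $1/d$ exactly because $\exval{\psi}{\rho}{\psi}>\exval{\psi}{I\otimes\rho_B}{\psi}$ — the very inequality defining violation of the reduction criterion. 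This recovers the filtering step of Ref.~\cite{horodecki_reduction}.

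I do not anticipate a real obstacle: once the correspondence is fixed the content is bookkeeping. The two points that need care are (i) pinning down the conventions so that the powers of $d$ cancel exactly as above, and (ii) checking that $\rho'$ is well defined, i.e.\ $\tr[\rho_B MM^{\dagger}]\neq 0$. The latter is automatic: if it vanished then $\sqrt{\rho_B}\,M=0$, so the range of $M$ lies in $\ker\rho_B$ and hence $\ket{\psi}\in\mathbb{C}^d_A\otimes\ker\rho_B$; but $\rho$ annihilates that subspace (for $P$ the projector onto $\ker\rho_B$ one has $\tr[(I\otimes P)\rho(I\otimes P)]=\tr[\rho_B P]=0$ with $(I\otimes P)\rho(I\otimes P)\ge 0$), so $\exval{\psi}{\rho}{\psi}=0$, contradicting $\exval{\psi}{\rho}{\psi}>\exval{\psi}{I\otimes\rho_B}{\psi}\ge 0$. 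Finally, $\sigma\mapsto(I\otimes M^{\dagger})\sigma(I\otimes M)$ is completely positive (a single Kraus operator), so Lemma~1 applies to it and the filter preserves unsteerability, which is all that the main proof requires.
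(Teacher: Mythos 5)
Your proof is correct and follows essentially the same route as the paper's: you take the vector witnessing the failure of $I\otimes\rho_B-\rho\ge 0$, turn it into a one-sided filter via the Choi/vectorisation correspondence so that $(I\otimes F_B^\dagger)\ket{\phi^+_d}\propto\ket{\psi}$, and the overlap bound $\tr(\rho'\ketbra{\phi^+_d}{\phi^+_d})>1/d$ drops out of the same two trace identities. Your extra check that the normalisation is nonzero is a small but welcome addition the paper leaves implicit.
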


\begin{proof}
	If the operator $I\otimes \rho_B - \rho$ is not positive definite, there exists a pure quantum state  $\ket{\psi}=\sum_{ij}\alpha_{ij}\ket{ij}$ such that
	\begin{align}
	\tr((I\otimes \rho_B - \rho)\ketbra{\psi}{\psi})<0
	\end{align}
	or equivalently, $\tr(\rho\ketbra{\psi}{\psi}))>\tr(\rho_B\psi_B)$, where $\psi_B=\tr_A(\ketbra{\psi}{\psi})$. We now define the local filter ${F_B:=\sqrt{d} \sum_{ij} \overline{\alpha_{ij}} \ketbra{i}{j}}$, with $\overline{\alpha_{ij}}$ being the complex conjugate of $\alpha_{ij}$. With that, we see that $\ket{\psi}= I\otimes F_B^\dagger\ket{\phi^+_d}$, and construct the filtered state is
	\begin{align}
	\rho':=\frac{I\otimes F_B \rho I\otimes F_B^\dagger }{\tr(I\otimes F_B \rho I\otimes F_B^\dagger )}.
	\end{align}
	Now, a straightforward calculation shows that
	\begin{align}
	& F_B^\dagger F_B= d \rho_B   \nonumber, \\
	&	\tr(I\otimes F_B \rho I\otimes F_B^\dagger) = d \tr (\rho_B\psi_B), \nonumber \\
	&	\tr( I\otimes F_B \rho I\otimes F_B^\dagger \ketbra{\phi^+_d}{\phi^+_d}) = \tr(\ketbra{\psi}{\psi} \rho),
	\end{align}
	and since we have $\tr(\rho\ketbra{\psi}{\psi})>\tr(\rho_B\psi_B)$ by hypothesis, it follows that 
	\begin{equation}
	\tr(\rho'\ketbra{\phi^+_d}{\phi^+_d})>1/d.
	\end{equation}
\end{proof}


\bibliographystyle{linksen}  
\bibliography{mtqbib.bib}        

\end{document}